\newtheorem{theorem}{Theorem}
\newtheorem{example}{Example}
\newcommand{\leaveout}[1]{}
\newcommand{\citep}[1]{\cite{#1}}
\newcommand{\citet}[1]{\cite{#1}}
\newcommand{\citeauthor}[1]{\cite{#1}}
\newcommand{\citeyear}[1]{\cite{#1}}
\title{A Rotating Proposer Mechanism for Team Formation}
\author{Jian Lou$^1$, Chen Hajaj$^2$, and Yevgeniy
  Vorobeychik$^4$\\
  $^1$Amazon, $^2$Ariel University, $^4$Washington University in St.~Louis
}
\date{}
\begin{document}

	





\maketitle

\section{The Rotating Proposer Mechanism}

A \emph{rotating proposer mechanism} is a way to operationalize
\emph{sequential proposer games}  in mechanism design for team formation.
formally define a \emph{team formation mechanism}.
A \emph{team formation
mechanism} $M$ maps every preference profile $\succ $ to a partition
$\pi $ of the set of players,
i.e. $\pi =M(\succ )$. Our goal is to exhibit such a mechanism, and analyze
its properties. The mechanism, termed Rotating Proposer Mechanism (RPM),
implements the subgame perfect Nash equilibrium of the sequential
proposer game with each proposer able to make an offer to each
possible team, in which all proposals are accepted (thus, only a
single offer is actually made).
In this equilibrium, whenever it's a player $i$'s
turn to propose, $i$ makes a proposal to her most preferred team among those
that would be accepted.

For any profile, if all players report their preferences truthfully,
equilibrium outcomes of the game have a number of good properties which are
thereby inhereted by RPM. Of particular note is that RPM is individually
rational, Pareto optimal, and implements iterated matching of
soulmates (IMS) (see \citep{leo2021matching}). However, it is also immediate
from known results that the RPM mechanism is not in general strategyproof
(this would conflict with individual rationality and implementing IMS~%
\citep{leo2021matching}).

The loss of incentive compatibility seems problematic.
However, one side-effect of RPM implementing IMS is that RPM is strongly
incentive compatible\footnote{%
More precisely, truth telling is a strong ex post Nash equilibrium.} and
yields a unique core team structure on a restricted class of preference
domains for which IMS always matches all players~\citep{leo2021matching}. As an
example, these domains include other well-known restrictions on preferences,
such as top coalition~\citep{banerjee_core_2001} and common ranking~%
\citep{Farrell88} properties.

This, however, would seem to limit its practical consideration, as such
restrictions can rarely be guaranteed or verified. Moreover, we wish to make
stronger efficiency claims than Pareto optimality, and also view fairness as
an important criterion. 
For the former, we are particularly interested in \emph{utilitarian social
welfare}, a much stronger criterion than Pareto efficiency. We will also
consider several notions of fairness discussed below.
While we cannot make strong theoretical guarantees about these for broad
realistic preference domains, we consider such properties empirically.


While RPM is a rather intuitive mechanism, it is quite challenging to
implement the associated subgame perfect Nash equilibrium. In particular,
the size of the backward induction search tree is $O(2^{\sum_{i=1}^{n} | 
\mathcal{T}_i|})$. Even in the roommate problem, in which the size of teams
is at most two, computing SPNE is $O({2^{n}}^2)$. We address this challenge
in three ways: (1) preprocessing and pruning to reduce the search space, (2)
approximation for the roommate problem, and (3) a general heuristic
implementation.

\subsection{Preprocessing and Pruning}

One of the central properties of RPM is that it implements iterative
matching of soulmates. In fact, it does so in every subgame in the backwards
induction process. Now, observe that computing the subset of teams produced
through IMS is $O(n^3)$ in general, and $O(n^2)$ for the roommate problem,
and is typically much faster in practice. We therefore use it as a
preprocessing step both initially (reducing the number of players we need to
consider in backwards induction) and in each subgame of the backwards
induction search tree (thereby pruning irrelevant subtrees).

Because IMS preprocessing is computationally efficient, it is always applied
before any of the approximate/heuristic versions of RPM below, with the
direct consequence that even these approximate versions implement IMS.

We show the computational value of IMS in preprocessing and pruning using
synthetic preference profiles based on the generative scale-free model. 

\begin{figure}[htb]
\begin{center}
\includegraphics[width = 80mm, 
height=43mm]{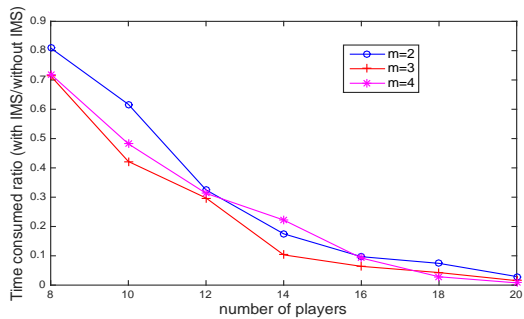}
\end{center}
\caption{Time consumed ratio (with IMS/without IMS) for RPM on scale-free
networks}
\label{fig:time_consumed}
\end{figure}


Figure \ref{fig:time_consumed} shows the ratio of time consumed by RPM with
IMS to that without IMS.\footnote{%
The simulations describes in this section were run on a 2.6 GHz Intel Core
i5 Mac machine with 8 GB RAM.} 
In all cases, we see a clear trend that using IMS in preprocessing and
pruning has increasing importance with increased problem size. 

\subsection{Approximate RPM for the Roommate Problem}

\label{subsec:alpha}

Using IMS for preprocessing and pruning does not sufficiently speed up RPM
computation in large-scale problem instances. Thus, we next developed a
parametric approximation of RPM that allows us to explicitly trade off
computational time against approximation quality. We leverage the
observation that the primary computational challenge of applying RPM to the
roommate problem is determining whether a proposal is to be accepted or
rejected. If we are to make this decision without exploring the full game
subtree associated with it, considerable time can be saved. Our approach is
to use a heuristic to evaluate the \textquotedblleft
likely\textquotedblright\ opportunity of getting a better teammate in later
stages: if this heuristic value is very low, the offer is accepted; if it is
very high, the offer is rejected; and we explore the full subgame in the
balance of instances.

More precisely, consider an arbitrary offer from player $i$ to another
player $j$. Given the subgame of the corresponding RPM, let $\mathcal{U}_{j}
(i)$ denote the set of feasible teammates that $j$ prefers to $i$, and let $%
\mathcal{U}_{j} (j)$ be the set of feasible teammates who $j $ prefers to be
alone. We can use these to heuristically compute the likelihood $R_j (i) $
that $j$ can find a better teammate than the proposer $i$: 
\begin{equation}
\begin{aligned} R_j (i)= & \frac{ |\mathcal{U}_{j} (i) |}{| \mathcal{U}_{j}
(j) |} \cdot \frac{1}{| \mathcal{U}_{j} (i) |} \sum_{k \in \mathcal{U}_{j}
(i)} \left( 1- \frac{ |\mathcal{U}_{k} (j) |}{ |\mathcal{U}_{k} (k)| }
\right) = & \frac{1}{ | \mathcal{U}_{j} (j) | } \sum_{k \in \mathcal{U}_{j}
(i)} \left( 1- \frac{ |\mathcal{U}_{k} (j) |}{ |\mathcal{U}_k (k)| } \right)
\end{aligned}  \label{eq:R_j}
\end{equation}
Intuitively, we first compute the proportion of feasible teammates that $j$
prefers to $i$. Then, for each such teammate $k$, we extract the proportion
of feasible teammates who are not more preferred by $k$ than the receiver $j$%
. Our heuristic then uses an exogenously specified threshold, $\alpha $, ($%
0\leq \alpha \leq 0.5$) as follows. If $R_j (i) \leq \alpha$, player $j$
accepts the proposal, while if $R_j (i) \geq 1-\alpha$, the proposal is
rejected. In the remaining cases, our heuristic proceeds with evaluating the
subgame at the associated decision node. Consequently, when $\alpha = 0$, it
is equivalent to the full backwards induction procedure, and computes the
exact RPM. Note that for any $\alpha$, this approximate RPM preserves IR,
and we also maintain IMS by running it as a preprocessing step.

The parameter $\alpha$ of our approximation method for RPM in the roommate
problem allows us to directly evaluate the trade-off between running time
and quality of approximation; small $\alpha$ will lead to less aggressive
use of the acceptance/rejection heuristic, with most evaluations involving
actual subgame search, while large $\alpha$ yields an increasingly heuristic
approach for computing RPM, with few subgames fully explored.

\begin{figure}[hbtp]
\centering
\begin{subfigure}[b]{0.45\textwidth}
		\includegraphics[width =\textwidth,  height=33mm]{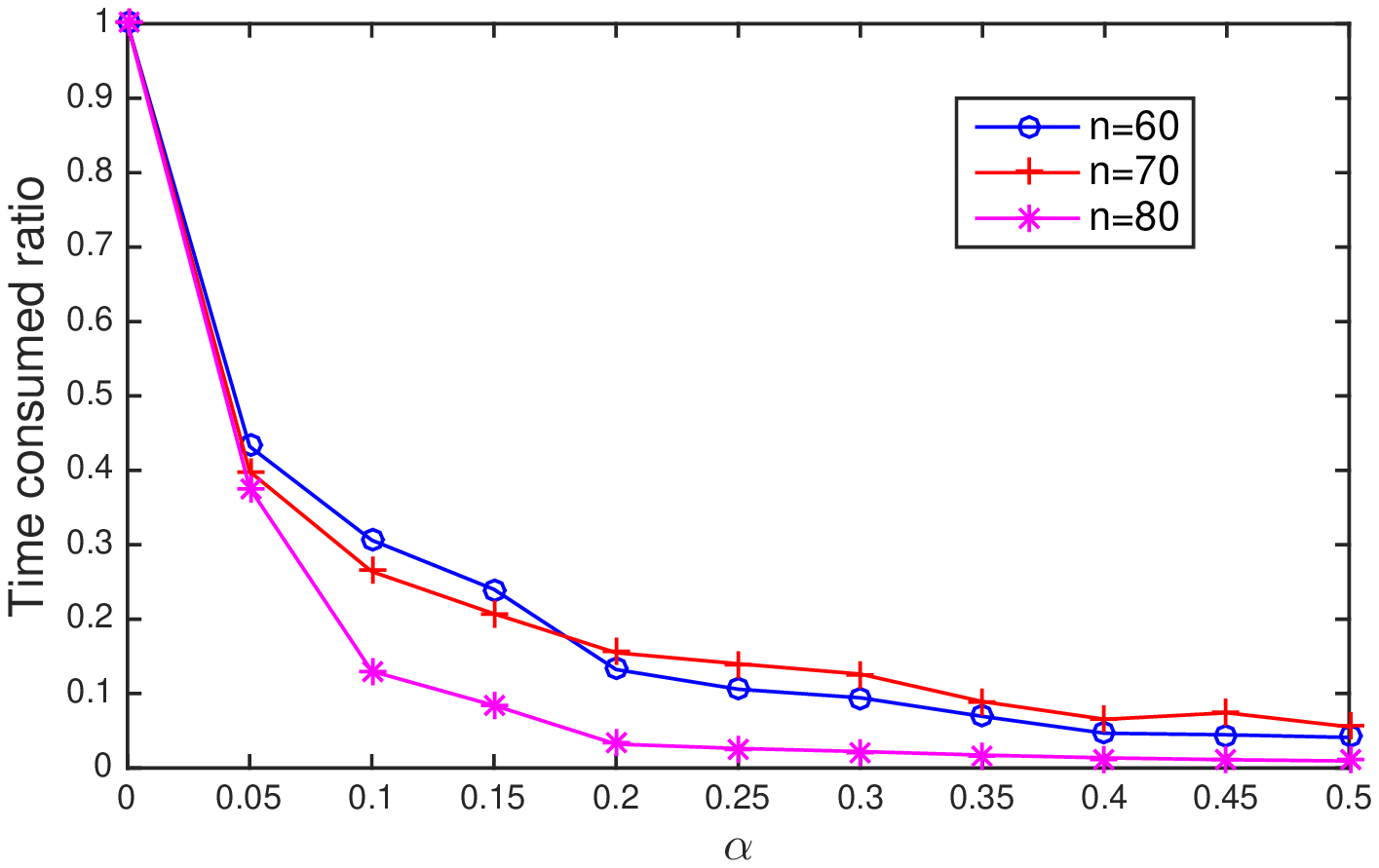}
		\caption{Time consumed ratio}
		\label{fig:Time_heuristic}
	\end{subfigure}
\begin{subfigure}[b]{0.45\textwidth}
		\includegraphics[width = \textwidth,  height=33mm]{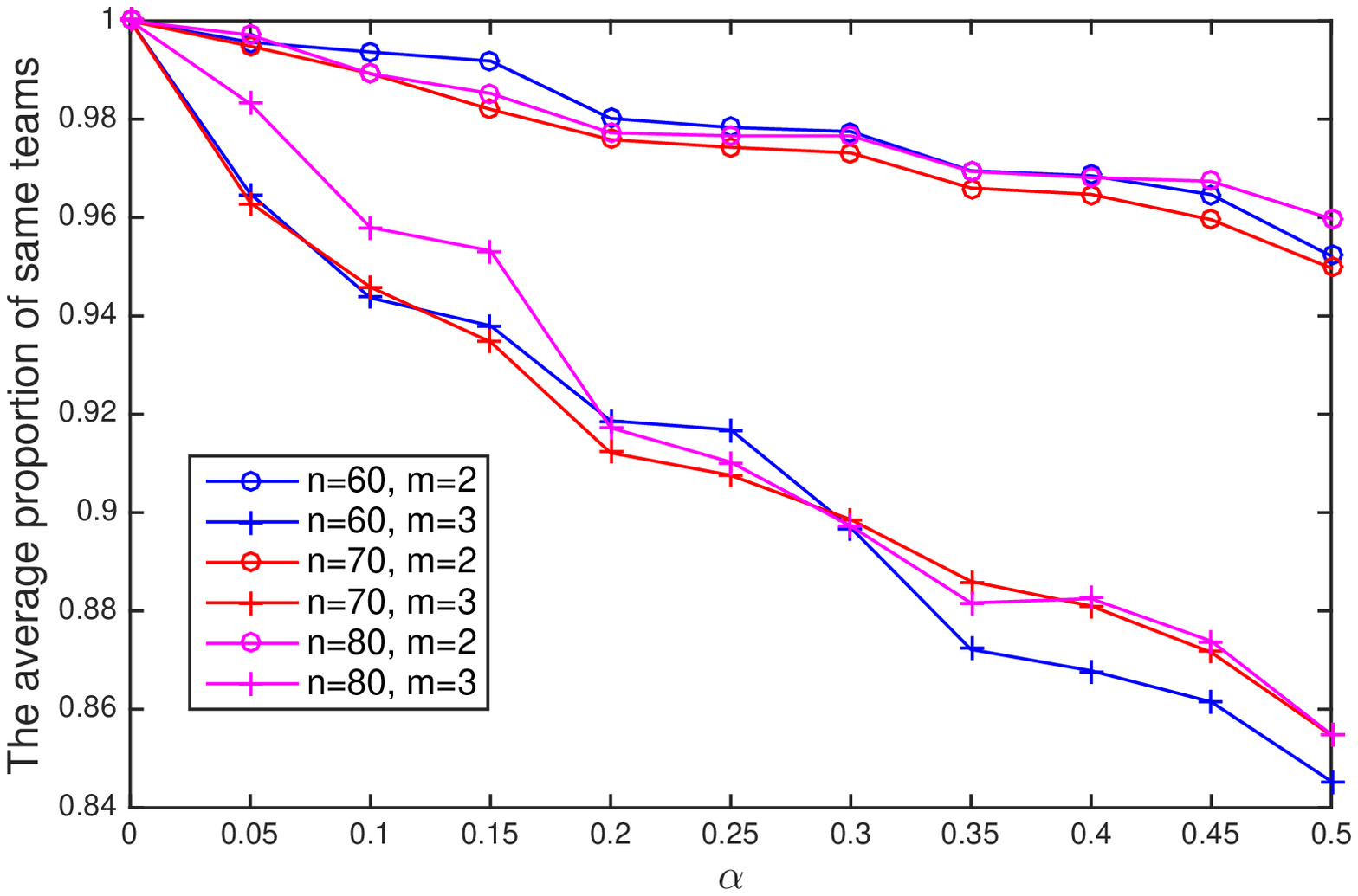}
		\caption{average proportion of same teams} 
		\label{fig:Team_heuristic}
	\end{subfigure}
\caption{Time consumed and average proportion of same teams}
\end{figure}

Figure \ref{fig:Time_heuristic} depicts the fraction of time consumed by RPM
with different values of $\alpha$ compared to exact RPM (when $\alpha = 0$)
on scale-free networks ($m=3$). Based on this figure, even a comparatively
small value of $\alpha$ dramatically decreases computation time.

Figure \ref{fig:Team_heuristic} compares similarity of the final team
partition when using the heuristic compared to the exact RPM. 
Notice that even for high values of $\alpha $, there is a significant
overlap between the outcomes selected by RPM with and without the heuristic. 
We note that $\alpha = 0.1$ appears to trade off approximation quality and
running time particularly well: for comparatively sparse networks (i.e., $%
m=2 $) it yields over $99\%$ overlap with exact RPM (this proportion is only
slightly worse for denser networks), at a small fraction of the running
time. Henceforth, we use $\alpha=0.1$ when referring to the approximate RPM
in the reminder of this section.

\subsection{Heuristic Rotating Proposer Mechanism (HRPM)}

\label{subsection:ARPM}

Unlike the roommate problem, general team formation problems have another
source of computational complexity: the need to iterate through the
combinatorial set of potential teams to propose to. Moreover, evaluating
acceptance and rejection becomes considerably more challenging. We therefore
develop a more general heuristic which scales far better than the approaches
above, but no longer has the exact RPM as a special case. We term the
resulting approximate mechanism \emph{Heuristic Rotating Proposer Mechanism
(HRPM)}, and it assumes that the sole constraint on teams is their
cardinality and that preferences can be represented by an additively
separable utility function~\citep{banerjee_core_2001}. With the latter
assumptions, we allow preferences over teams to be represented simply as
preference orders over potential teammates, avoiding the combinatorial
explosion in the size of the preference representation.

In HRPM, each proposer $i$ attempts to add a single member to their team at
a time in the order of preferences over players. If the potential teammate $%
j $ accepts $i$'s proposal, $j$ is added to $i$'s team, and $i$ proposes to
the next prospective teammate until either the team size constraint is
reached, or no one else who $i$ prefers to being alone is willing to join
the team. Player $j$'s decision to accept or reject $i$'s proposal is based
on calculating $R_j(l)$ for each member $l$ of $i$'s current team $T$ using
Equation~\ref{eq:R_j}, and then computing the average for the entire team, $%
R_j(T)= \frac{1}{|T|} \sum_{l\in T} R_j(l)$ (see Algorithm \ref{alg:ARPM} 
for the precise description of HRPM). We then use an exogenously specified
threshold $\beta \in [0,1]$, where $j$ accepts if $R_j(T)\leq \beta$ and
rejects otherwise. The advantage of HRPM is that the team partition can be
found in $O(\omega n^2)$, where $\omega$ is the maximum team size. The
disadvantage, of course, is that it only heuristically implements RPM.
Crucially, it does preserve IR, and IMS is implemented as a preprocessing
step. 
\begin{algorithm}[hbtp]
	\caption{Heuristic Rotating Proposer Mechanism (HRPM)}
	\label{alg:ARPM}
	\textbf{input:}  $(N , \succeq , O)$, $\omega$, $\beta$ \\
	\textbf{return:}  Team formation outcome $\pi$\\
	\begin{algorithmic}[1]
		
		\State $\pi= \emptyset$
		\While{$O$ is non-empty}
		\State $i \leftarrow$ the first player in $O$ 
		\State $\pi_i \leftarrow \{i\}$
		\While{$|\pi_i| < \omega$}
		\If{$\succeq_i$ is empty or the first player in $\succeq_i$ is $i$}
		\State $O\leftarrow O \backslash \{ i\}$
		\State break
		\EndIf
		\State player $i$ proposes to the first player $j$ in $\succeq_i$
		\State $R_j(\pi_i)= \frac{1}{|\pi_i|} \sum_{l\in \pi_i} \frac{1}{ | \mathcal{U}_{j} (j) | } \sum_{k \in \mathcal{U}_{j}
			(i)} \left( 1- \frac{ |\mathcal{U}_{k} (j) |}{ |\mathcal{U}_k (k)| } \right) $
		\If{$R_j(\pi_i) \leq \beta $}  \Comment{player $j$ accepts the proposal}
		\State $\pi_i \leftarrow \pi_i \cup \{ j\}$
		\State remove $ j $ from $ O $ and $ N $
		\EndIf
		\State remove $j$ from $\succeq_k$ for each player $k\in N$
		\EndWhile
		\State remove $ i $ from $ O $, $ N $ and  $\succeq_k$ for each player $k\in N$
		\EndWhile
		
		\While{$N$ is non-empty} \Comment{add singletons into the outcome.}
		\State pick an arbitrary instance $i$ from $N$
		\State remove $ i $ from $ O $ and $ N $
		\EndWhile
		\State return $\pi$
	\end{algorithmic}
      \end{algorithm}

\section{Properties of Exact and Approximate RPM}

Over truthful preference reports, RPM inherits the properties of the game,
including IR, IMS, and Pareto efficiency. In general, however, these
properties conflict with incentive compatibility. Moreover, when it comes to
efficiency, Pareto optimality is a weak criterion and we would wish to know
how well a mechanism fairs in terms of stronger efficiency criteria, such as
utilitarian social welfare (with cardinal preferences). Fairness, too, is an
important consideration in matching, particularly when it comes to forming
teams. Next, we explore these issues using empirical tools.

\subsection{Empirical Methodology}

In our empirical assessments, we use both synthetic and real hedonic
preference data. In both cases, preferences were generated based on a social
network structure in which a player $i$ is represented as a node and the
total order over neighbors is then generated randomly. Non-neighbors
represent undesirable teammates ($i$ would prefer being alone to being
teamed up with them).

The networks used for our experiments were generated using the following
models: 

\begin{itemize}
\item \textbf{Scale-free network:} We adapt the Barab\'{a}si-Albert model %
\citep{RevModPhys.74.47} to generate scale-free networks. For each $(n,m)$,
where $n$ is the number of players, $m$ denotes the density of the network,
we generate $1,000$ instances of networks and profiles.

\item \textbf{Karate-Club Network~\citep{Zac77}:} This network represents an
actual social network of friendships between 34 members of a karate club at
a US university, where links correspond to neighbors. We generate $100$
preference profiles based on the network.
\end{itemize}

Finally, we used a \emph{Newfrat} dataset~\citep{newfrat} that contains $15$
matrices recording weekly sociometric preference rankings from $17 $ men
attending the University of Michigan. In order to quantitatively evaluate
both the exact and approximate variants of RPM, the ordinal preferences $%
\succ_i$ have to be converted to cardinal ones $u_i(\cdot)$, upon which both
mechanisms operate. For this purpose, we introduce a \emph{scoring function}
suggested by~\citet{Bouveret:2011} to measure a player's utility. To compute
a player $i$'s utility of player $j$ we adopt \emph{normalized Borda scoring
function}, defined as $u_i(j) = g(r)=2(k-r+1)/k-1$, where $k$ is the number
of $i$'s neighbors, and $r \in \{1,\ldots,k\}$ is the rank of $j$ in $i$'s
preference list. Without loss of generality, for every player $i$ we set the
utility of being a singleton $u_i(i) = 0 $. We assume that the preferences
of players are additively separable~\citep{banerjee_core_2001}, which means
that a player $i$'s utility of a team $T$ is $u_i(T)= \sum_{j\in T} u_i(j)$.

\subsection{Incentive Compatibility}

\label{S:IC}

In spite of the known impossibility results, the fact that RPM is not
incentive compatible may be intuitively surprising, given that it implements
an equilibrium of the complete information game. 
To gain further intuition into this, consider the following example.

\begin{example}
Consider a roommate problem with 3 players having the following preferences: 
\begin{equation*}
\begin{matrix}
& 1 : & \{1, 2 \} & \succ_{1} & \{1, 3\} & \succ_{1} & \{ 1\} \\ 
& 2 : & \{2, 3 \} & \succ_{2} & \{2, 1\} & \succ_{2} & \{ 1\} \\ 
& 3 : & \{3, 1 \} & \succ_{3} & \{3, 2\} & \succ_{3} & \{ 1\} \\ 
&  &  &  &  &  & 
\end{matrix}%
\end{equation*}

Suppose that the order in RPM is $O=(1, 1, 1, 1, 2, 2, 2, 2, 3, 3, 3, 3)$.
In the subgame perfect Nash equilibrium of the corresponding RPG, 1 will
propose to $\{1, 2\}$, $2$ will accept, and the resulting teams are $\{
\{1,2\}, \{3\} \}$. This is because 2 is 1's most preferred roommate, and if
2 rejects, then 1 would offer to 3 who would accept (since they like 1 more
than 2), and 2 would be left alone.

Now, if player 3 misreports preferences to claim that she prefers 2 to 1,
then 2 and 3 are soulmates and would be matched, with the resulting outcome $%
\{\{1\}, \{2,3\}\}$. The latter outcome is clearly preferred by 3, and
consequently 3 has the incentive to lie.
\end{example}


Despite the general failure of incentive compatibility in RPM, we now
explore \emph{empirically} how frequently this failure actually occurs. We
use the roommate problem, as in this case the special structure of RPM
allows us to use Algorithm~\ref{alg:upperBound} to compute an upper bound on
the number of players who could possibly benefit by misreporting
preferences. 
In applying the algorithm, we use $\mathcal{T}_{i}$ to denote the set of
feasible teammates (since teams are of size at most 2).

At the high level, this algorithm considers all the players who have
accepted or rejected a proposal and checks whether reversing this decision
improves their outcomes. The following theorem shows that this method indeed
finds the upper bound on the number of untruthful players.

\begin{theorem}
Algorithm \ref{alg:upperBound} returns an upper bound on the number of
players who can gain by misreporting their preferences.
\end{theorem}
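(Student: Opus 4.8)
The statement is an \emph{upper bound} claim, so the proof obligation is a one-sided inclusion: every player who can strictly gain by \emph{some} misreport must be among those the algorithm flags, while flagging a player who in fact cannot gain is harmless. I would therefore restate the target in contrapositive form: if reversing each of player $i$'s accept/reject decisions (as recorded along the truthful RPM execution) fails to produce an outcome that $i$ strictly prefers to her truthful partner $m_i$, then $i$ has no profitable deviation at all. Concretely, I would fix the truthful run, read off for each $i$ the (at most linearly many) nodes at which $i$ accepts or rejects a proposal in the roommate game, and treat the algorithm's reversal test at each such node as an \emph{optimistic} estimate of what $i$ could obtain by behaving differently there. The whole proof then reduces to showing that this optimistic, single-reversal test never yields a false negative.

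The heart of the argument is a reduction lemma: any partner $p$ with $p \succ_i m_i$ that $i$ can secure through an arbitrary reported order $\succ_i'$ is already witnessed by one of $i$'s reversals, so its optimistic estimate is at least $u_i(p) > u_i(m_i)$ and $i$ is flagged. I would prove this by coupling the manipulated execution to the truthful one and tracing the pivotal interaction through which $i$ reaches $p$. Because teams have size at most two, $i$'s outcome is entirely determined by the single partner she ends with, and $p$ must at some point either propose to $i$ or be proposed to by $i$; the corresponding truthful node is exactly a decision the algorithm inspects. I would then split into the two cases the algorithm distinguishes: if $i$ truthfully \emph{rejected} $p$, then ``accept $p$'' is the reversal and its optimistic value is $u_i(p)$; if $i$ truthfully \emph{accepted} some $j$ with $u_i(j) < u_i(p)$, then the relevant reversal is ``reject $j$ and re-solve,'' whose optimistic estimate is the best feasible partner still available to $i$ after the rejection, with utility at least $u_i(p)$. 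In both cases the feasible-teammate set $\mathcal{T}_i$ used by the algorithm bounds the candidates, and the IMS/IR preprocessing only removes partners $i$ could never obtain, so it can only shrink the candidate set without dropping a reachable $p$.

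Assembling these, the inclusion \{players with a profitable misreport\} $\subseteq$ \{players flagged by Algorithm~\ref{alg:upperBound}\} follows, and taking cardinalities gives the claimed upper bound. I would close by noting the asymmetry that makes the bound valid but not tight: the optimistic reversal estimates may credit $i$ with a partner she could not actually retain under a legitimate order (false positives), which is precisely why the quantity returned is an \emph{upper} bound rather than an exact count.

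The step I expect to be the main obstacle is the reduction lemma, specifically the coupling between a \emph{general} preference misreport and a single decision reversal. The difficulty is that RPM computes the SPNE of a full-information game, so $i$'s reported order perturbs not only $i$'s own accept/reject thresholds but also every other player's equilibrium continuation, and a clever $\succ_i'$ could in principle reach $p$ through a path with no truthful analogue. The leverage I would rely on is that we only need an \emph{optimistic} single-reversal estimate to dominate: it suffices to show that whatever partner $p$ a misreport delivers, $p$ was at the pivotal round willing to be matched with $i$, hence the corresponding reversal's best-case value is at least $u_i(p)$. Making ``willing at the pivotal round'' precise within the sequential-proposer dynamics, and verifying that it is insensitive to how others' strategies shift under $\succ_i'$, is the technical crux.
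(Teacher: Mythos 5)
Your high-level framing is right --- the obligation is only a one-sided inclusion, and over-flagging is harmless --- but the argument has a genuine gap at its center, and it also mischaracterizes what Algorithm~\ref{alg:upperBound} computes. The algorithm never re-solves any subgame and computes no ``optimistic estimate of the best partner still available after a rejection.'' It is a purely static scan over the truthful RPM matching: for each matched pair it flags a player $b$ exactly when there is a mutual-improvement pair $(a,b)$ with $b \succ_a teammate[a]$ and $a \succ_b teammate[b]$. The theorem therefore rests on one claim: whoever a deviator ends up matched with under a misreport must prefer the deviator to her own \emph{truthful} RPM teammate, so every profitable deviation is witnessed by such a pair. That claim is precisely what the paper's proof supplies, by splitting on the deviator's role (proposer or receiver) before and after the deviation and arguing each of the four cases. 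You name this exact step (``the reduction lemma,'' the insensitivity of the witness to how others' equilibrium strategies shift under $\succ_i'$) as the technical crux and then leave it unproven --- but it is not a technical detail to be deferred; it is the entire content of the theorem.

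A second, related omission: your contrapositive quantifies only over accept/reject decisions that $i$ makes on the truthful equilibrium path, so it says nothing about players who \emph{propose} in the truthful run and never face an accept/reject node. The paper must handle these separately, and does: a proposer who remains a proposer cannot gain at all, because RPM already matches her with her most preferred teammate among those who would accept (SPNE optimality of the proposal step), and a proposer who becomes a receiver is caught by the first loop's mutuality test. Without an argument covering proposers, the inclusion \{players with a profitable misreport\} $\subseteq$ \{flagged players\} does not follow even granting your reduction lemma for receivers.
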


\begin{proof}
We divide the players into \emph{proposers} and \emph{%
receivers}. Proposers are those who propose in RPM and were thus teamed up
(including singleton teams). Receivers accept or reject someone's offer.

\begin{algorithm}[htbp]
	\caption{Computing Upper Bound of Untruthful Players}
	\label{alg:upperBound}
	\textbf{input:}  $(N, \succ, \mathcal{T}, O)$ , teammate vector $teammate[]$ which results from RPM\\
	\textbf{return:}  number of potential untruthful players $Sum$\\
	\begin{algorithmic}[1]		
		\State $sum\leftarrow0$
		\While{$|O| \geq 2$}
		\State $proposer$ $\leftarrow$ the first player in $O$
		\State $receiver$ $\leftarrow$ $teammate[proposer]$
		\For{player $i\in \mathcal{T}_{proposer}$}
		\If{ $i\succ_{proposer} receiver$  and $proposer \succ_{i} teammate[i]$ } 
		\State $sum\leftarrow sum+1$   \Comment{$i$ is potentially untruthful}
		\EndIf
		\EndFor
		\For{player $j\in \mathcal{T}_{receiver}$}
		\If{ $j\succ_{receiver} proposer$  and $receiver \succ_{j} teammate[j]$ } 
		\State $sum\leftarrow sum+1$   \Comment{$receiver$ is potentially untruthful}
		\EndIf
		\EndFor
		\State remove $proposer$ and $receiver$ from $N$, $O$ and $\mathcal{T}$
		\EndWhile
		\State return $sum$
	\end{algorithmic}
\end{algorithm}

There are 4 possible cases:

\begin{enumerate}
\item \emph{A proposer $i$ untruthfully reveals her preference and remains a
proposer.} 
As RPM implements subgame perfect Nash equilibrium in the corresponding
subgame, the proposer $i$ can match with the best roommate among those
accept her proposals by acting truthfully. Consequently, $i$ cannot improve
by lying.

\item \emph{A receiver $j$ untruthfully reveals her preference and is still
a receiver.} In this case, if $j$ has an incentive to lie, there has to be a
proposer $i^{\prime }$ who prefers $j$ to her teammate under RPM, while $j $
must prefer $i^{\prime }$ to her teammate. Steps $4-7$ in Algorithm \ref%
{alg:upperBound} count all such instances.

\item \emph{A proposer $i$ untruthfully reveals her preference and becomes a
receiver.} In this case, if $i$ has an incentive to untruthfully reveal her
preference, there has to be a proposer $i^{\prime }$ who prefers $i$ to
their teammate under RPM, and who $i$ also prefers to her teammate. Steps $%
4-7$ in Algorithm \ref{alg:upperBound} count all such instances.

\item \emph{A receiver $j$ untruthfully reveals her preference and becomes a
proposer.} In this case, if $j$ has an incentive to misreport her
preference, there must be a receiver $j^{\prime }$ who prefers $j$ to her
teammate, while $j$ must prefer $j^{\prime }$ to her teammate. Steps $8-10$
in Algorithm \ref{alg:upperBound} count all such instances.
\end{enumerate}
\end{proof}
This upper bound obtains for both the exact and approximate versions of RPM,
including HRPM. Next we evaluate the incentives to misreport preferences
using our RPM approximations in the context of the roommate problem.

\begin{table}[hbtp]
\caption{Average upper bound of untruthful players for (Approximate) RPM}
\label{Table:players}\setlength\tabcolsep{3pt}
\par
\begin{center}
{\small \ 
\begin{tabular}{|c|c|c|c|c|c|c|c|}
\hline
n & 20 & 30 & 40 & 50 & 60 & 70 & 80 \\ \hline
$m=2$, $\alpha=0$ & 0.015\% & 0.013\% & 0.013\% & 0.002\% & 0.008\% & 0.011\%
& 0.010\% \\ \hline
$m=2$, $\alpha=0.1$ & 0.015\% & 0.010\% & 0.015\% & 0.004\% & 0.022\% & 
0.029\% & 0.036\% \\ \hline
$m=3$, $\alpha=0$ & 0.105\% & 0.107\% & 0.072\% & 0.038\% & 0.037\% & 0.024\%
& 0.023\% \\ \hline
$m=3$, $\alpha=0.1$ & 0.115\% & 0.103\% & 0.085\% & 0.076\% & 0.065\% & 
0.074\% & 0.093\% \\ \hline
\end{tabular}
}
\end{center}
\end{table}

\begin{table}[hbtp]
\caption{Lower bound of profiles where every player is truthful for
(Approximate) RPM}
\label{Table:profiles}\setlength\tabcolsep{3pt}
\par
\begin{center}
{\small \ 
\begin{tabular}{|c|c|c|c|c|c|c|c|}
\hline
n & 20 & 30 & 40 & 50 & 60 & 70 & 80 \\ \hline
$m=2$, $\alpha=0$ & 99.7\% & 99.6\% & 99.5\% & 99.9\% & 99.6\% & 99.2\% & 
99.2\% \\ \hline
$m=2$, $\alpha=0.1$ & 99.7\% & 99.7\% & 99.4\% & 99.8\% & 98.8\% & 98.1\% & 
97.2\% \\ \hline
$m=3$, $\alpha=0$ & 97.9\% & 96.8\% & 97.1\% & 98.1\% & 97.8\% & 98.4\% & 
98.3\% \\ \hline
$m=3$, $\alpha=0.1$ & 97.8\% & 96.9\% & 96.8\% & 96.2\% & 96.3\% & 95.1\% & 
92.9\% \\ \hline
\end{tabular}
}
\end{center}
\end{table}

Table \ref{Table:players} presents the upper bound on the number of players
with an incentive to lie, as a proportion of all players, on scale-free
networks. We observe that the upper bound is always below $0.2\%$, and is
even lower when the networks are sparse ($m=2$). On the Karate club data, we
did not find any player with an incentive to lie in test cases when we apply
(Approximate) RPM. On the Newfrat data, the upper bounds are less than $7\%$
and $0.4\%$ when we apply RPM with and without heuristics, respectively. In
addition, we also computed the lower bound on the fraction of preference
profiles where truth telling is a Nash equilibrium (Table~\ref%
{Table:profiles}). We find that without the heuristic, when $m=2$ (sparse
networks), RPM is incentive compatible in more than $99\%$ of the profiles;
and when $m=3$ (the networks are comparatively dense), RPM is truthful at
least $96\%$ of the time.

\begin{table}[hbtp]
\caption{Average upper bound of untruthful players for HRPM}
\label{Table:ARPM}\setlength\tabcolsep{3pt}
\par
\begin{center}
{\small \ 
\begin{tabular}{|c|c|c|c|c|c|c|c|}
\hline
n & 20 & 30 & 40 & 50 & 60 & 70 & 80 \\ \hline
$m=2$, $\beta=0.5$ & 1.44\% & 1.77\% & 1.71\% & 2.00\% & 2.09\% & 2.16\% & 
2.06\% \\ \hline
$m=2$, $\beta=0.6$ & 1.62\% & 1.83\% & 1.96\% & 2.09\% & 2.25\% & 2.11\% & 
2.11\% \\ \hline
$m=3$, $\beta=0.5$ & 2.99\% & 3.36\% & 3.76\% & 3.90\% & 4.18\% & 4.02\% & 
4.33\% \\ \hline
$m=3$, $\beta=0.6$ & 3.44\% & 3.69\% & 3.97\% & 3.98\% & 4.40\% & 4.24\% & 
4.52\% \\ \hline
\end{tabular}
}
\end{center}
\end{table}

Table~\ref{Table:ARPM} presents the upper bound on the number of untruthful
players for HRPM (still for the roommate problem). Even with this heuristic,
we can see that fewer than 5\% of the players have any incentive to
misreport preferences in all cases.

\subsection{Efficiency}

\label{subsection:USW}

In terms of social welfare, ex post Pareto optimality, satisfied by both
random serial dictatorship (RSD)~\citep{Bade15} and RPM, is a very weak
criterion. Moreover, it is not necessarily satisfied by our approximations
of RPM. Conversion of ordinal to cardinal preferences allows us to
empirically consider \emph{utilitarian social welfare}, a much stronger
criterion commonly used in mechanism design with cardinal preferences.

We define utilitarian social welfare as $\frac{1}{|N|}\sum_{i\in N}
u_i(\pi_i) $, 
where $\pi_i$ is the team that $i$ was assigned to by the mechanism.

\begin{figure}[hbtp]
\centering
\begin{subfigure}[b]{0.48\textwidth}
		\includegraphics[width=\textwidth, height=35mm]{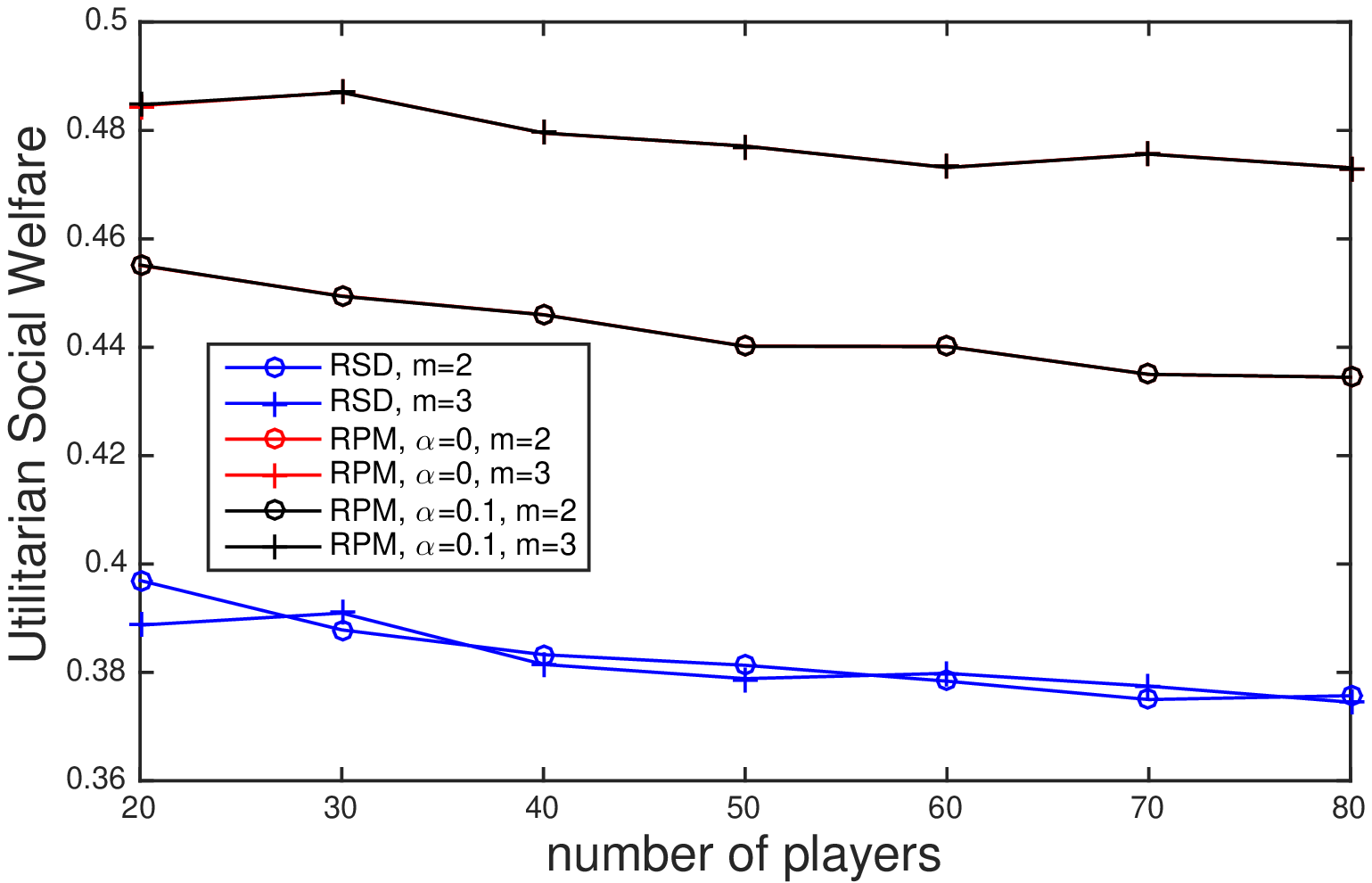} 
		\caption{scale-free networks}
		\label{fig:USW_Scale}
	\end{subfigure}
\begin{subfigure}[b]{0.48\textwidth}
		\includegraphics[width=\textwidth, height=35mm]{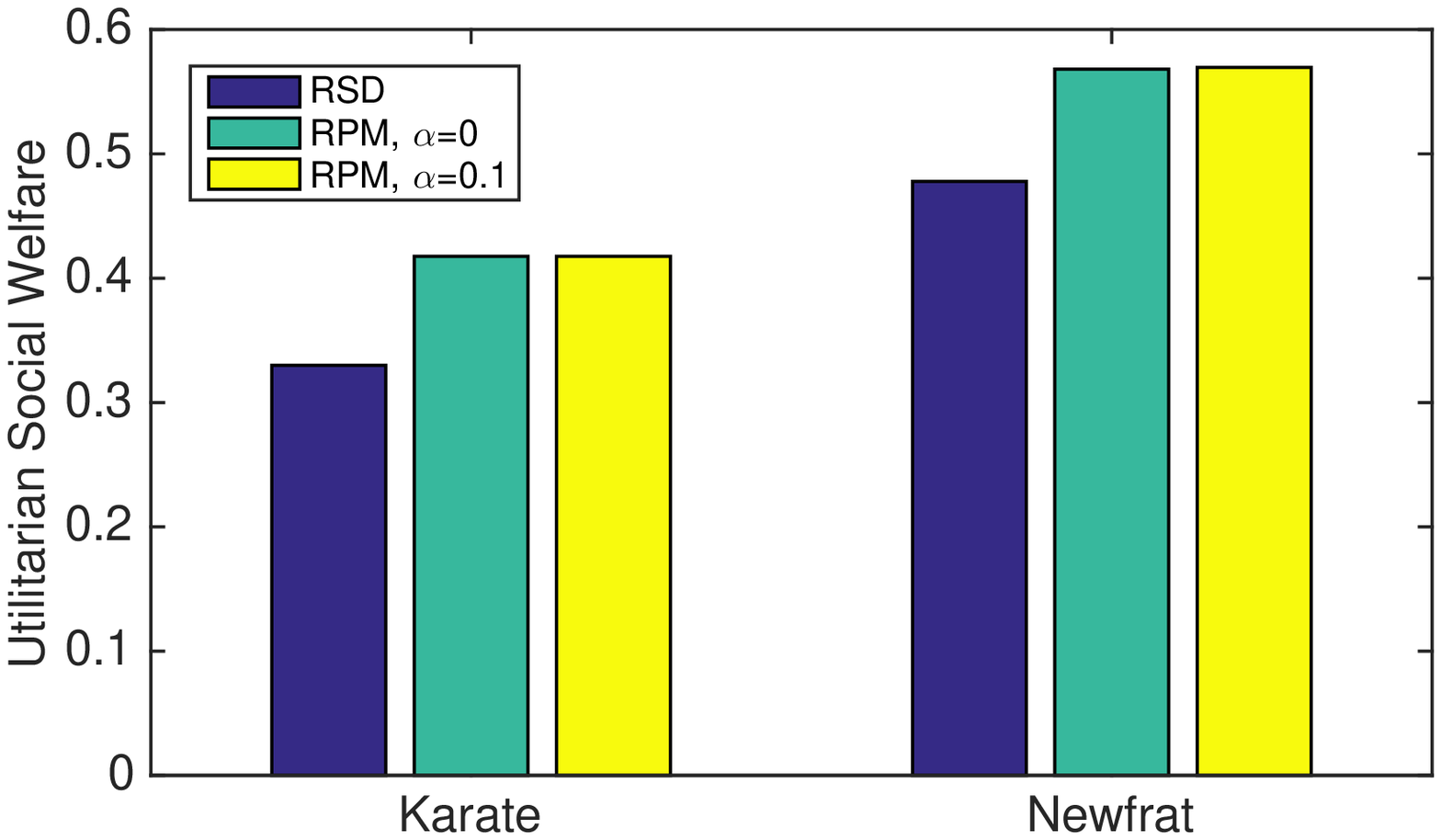} 
		\caption{Karate and Newfrat}
		\label{fig:USW_Real}
	\end{subfigure}
\caption{Utilitarian social welfare for roommate problem}
\end{figure}

Figures \ref{fig:USW_Scale} and \ref{fig:USW_Real} depict the average
utilitarian social welfare for RSD and RPM in the roommate problem on
scale-free networks, Karate club networks, and the Newfrat data. In all
cases, RPM yields significantly higher social welfare than RSD, with $%
15\%-20\% $ improvement in most cases. These results are statistically
significant ($p<0.01$). Furthermore, there is virtually no difference
between exact and approximate RPM.

\begin{figure}[hbtp]
\centering
\begin{subfigure}[b]{0.48\textwidth}
		\includegraphics[width=\textwidth, height=35mm]{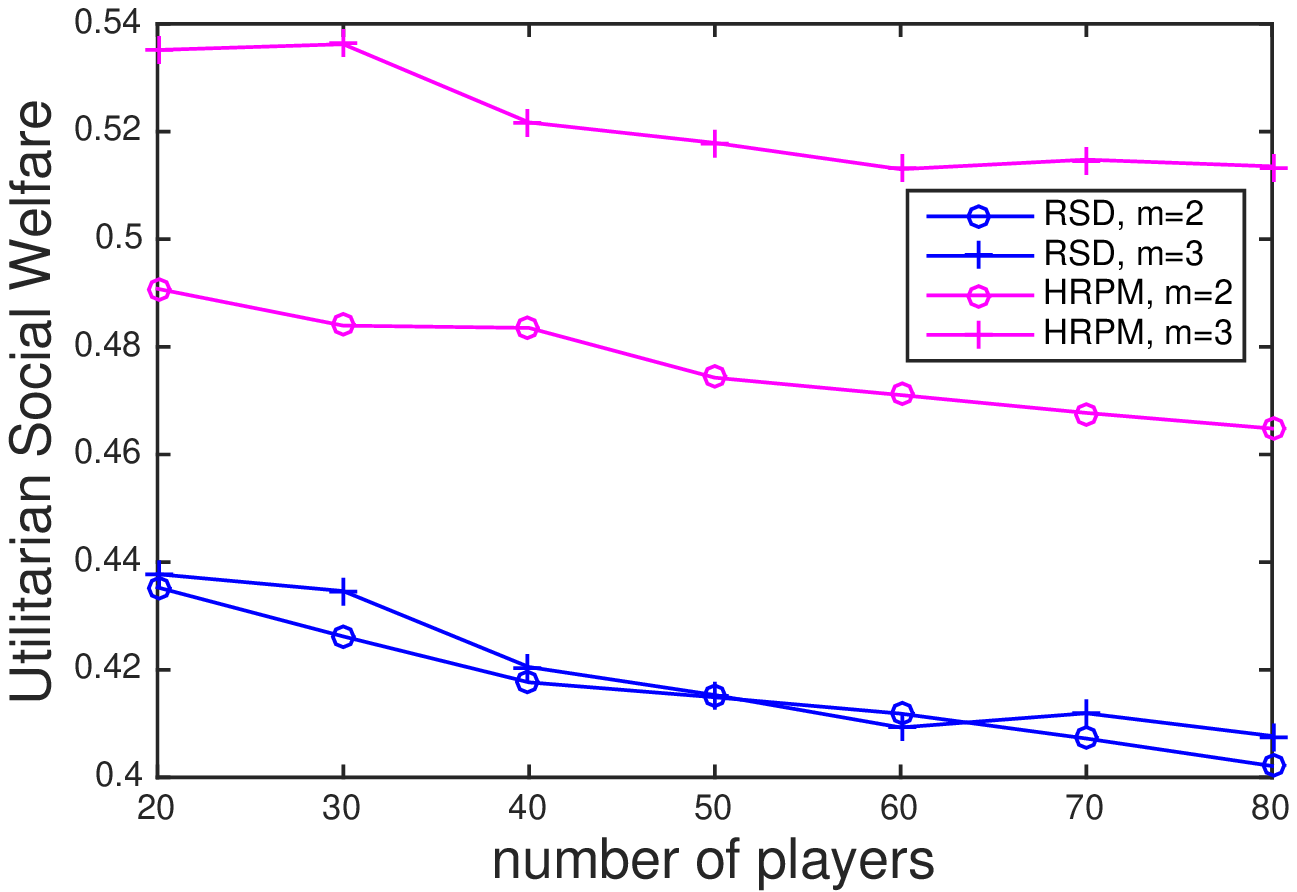} 
		\caption{scale-free networks}
		\label{fig:USW_Scale_three}
	\end{subfigure}
\begin{subfigure}[b]{0.48\textwidth}
		\includegraphics[width=\textwidth, height=35mm]{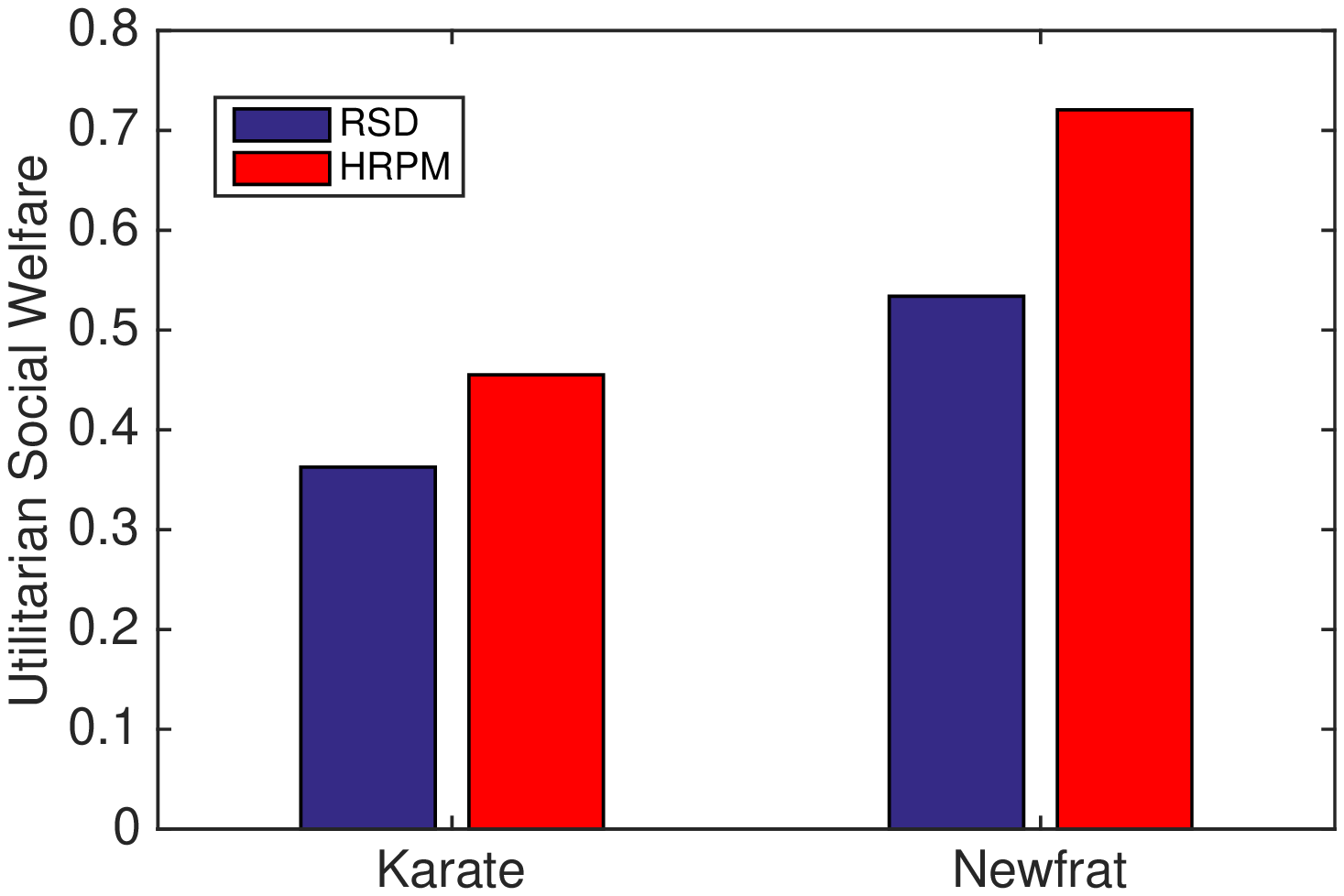} 
		\caption{Karate and Newfrat}
		\label{fig:USW_Real_three}
	\end{subfigure}
\caption{Utilitarian social welfare for trio-roommate problem}
\end{figure}

For the trio-roommate problem (in which the maximum size of team is 3), we
compare HRPM ($\beta=0.6$) with RSD on the same data sets. Figures \ref%
{fig:USW_Scale_three} and \ref{fig:USW_Real_three} show that HRPM yields
significantly higher social welfare than RSD in all instances, and HPRM
performs even better when the network is comparatively dense ($m=3$ in the
scale-free network). All results are statistically significant ($p<0.01$).

\subsection{Fairness}

\label{subsection:fairness}

A number of measures of fairness exist in prior literature. One common
measure, envy-freeness, is too weak to use, especially for the roommates
problem: every player who is not matched with her most preferred other will
envy someone else. Indeed, because RPM matches soulmates---in contrast to
RSD, which does not---it already guarantees the fewest number of envious
players in the roommates problem. 
We consider two alternative measures that aim to capture different and
complementary aspects of fairness: 
the Gini coefficient, representing the inequality among values of player
utilities, and the correlation between utility and rank in the random
proposer order (i.e., Pearson correlation).

\begin{figure}[hbtp]
\centering
\begin{subfigure}[b]{0.45\textwidth}
		\includegraphics[width = \textwidth, height=35mm]{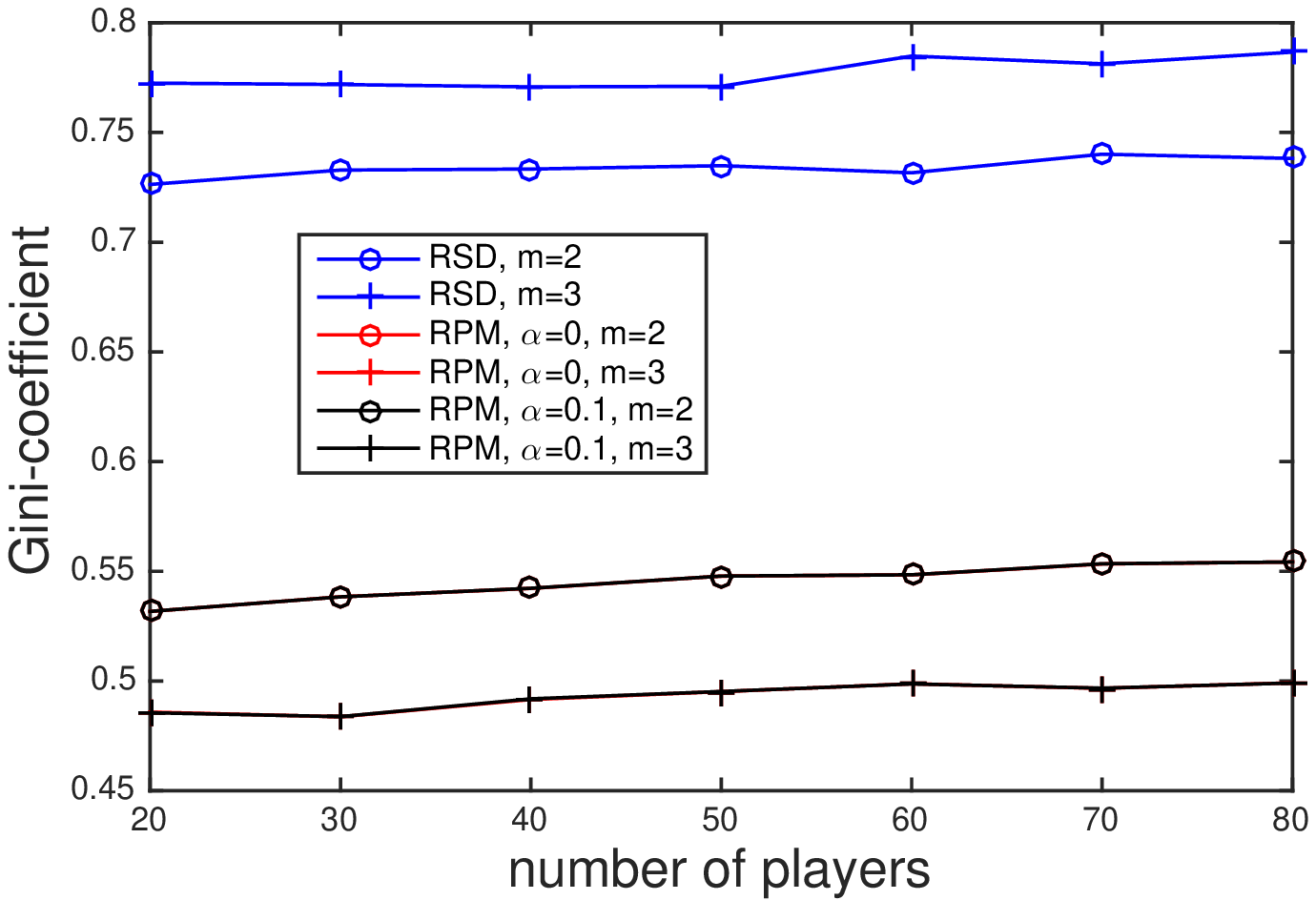} 
		\caption{scale-free network}
		\label{fig:Gini_Scale}
	\end{subfigure}
\begin{subfigure}[b]{0.45\textwidth}
		\includegraphics[width = \textwidth, height=35mm]{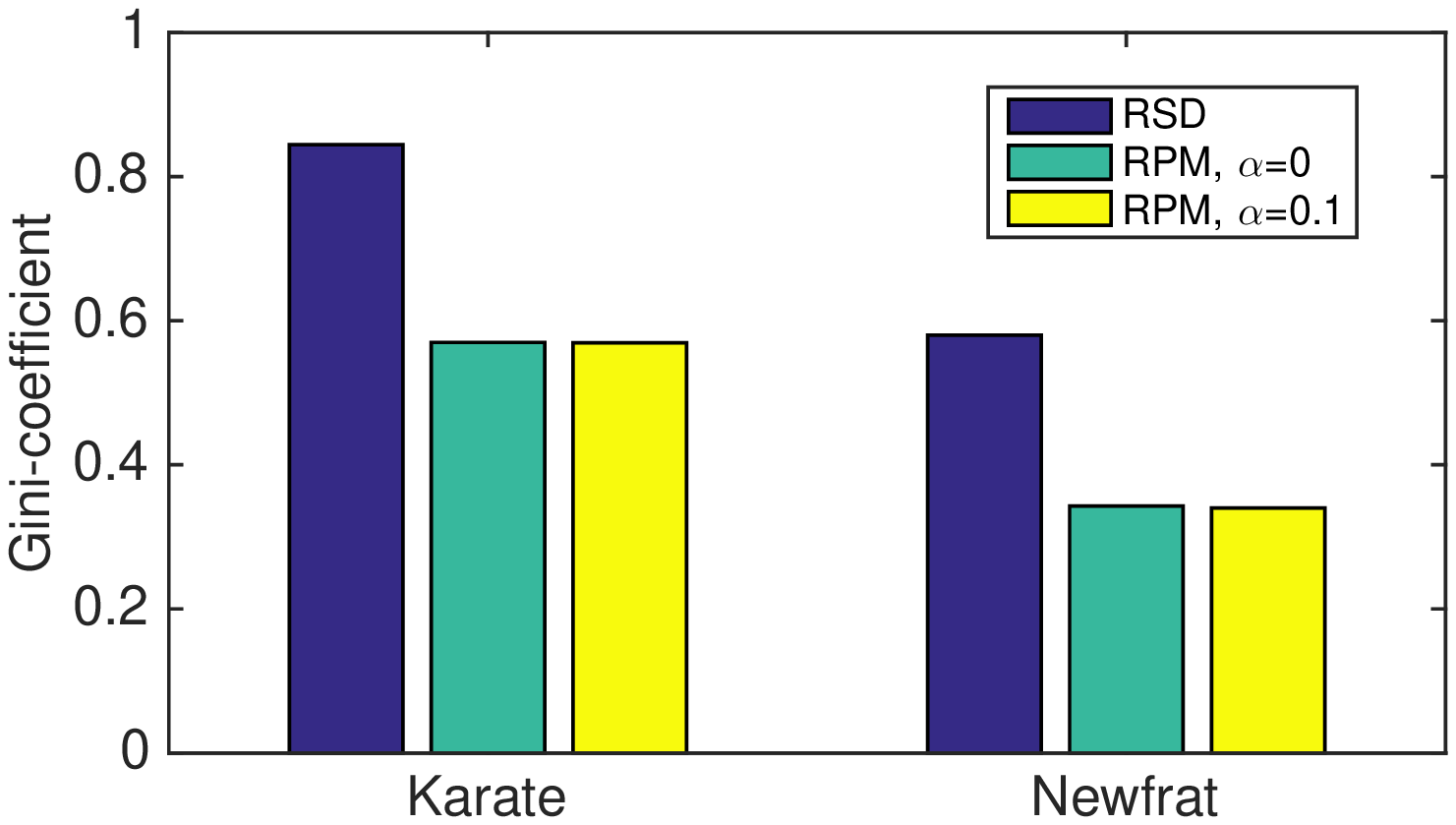} 
		\caption{Karate and Newfrat}
		\label{fig:Gini_Real}
	\end{subfigure}
\caption{Gini coefficient for the roommate problem}
\label{fig:Gini_Team}
\end{figure}

The Gini coefficient measures the inequality of player's utilities. It is
extracted based on the Lorenz curve~\citep{lorenz1905methods}.\footnote{%
The proportion of the total utility of the players that is cumulatively
earned by the bottom $x\%$ of the population.} A Gini coefficient of zero
expresses perfect equality, where all the players have the same utility,
while a Gini coefficient of one expresses maximal inequality among values
(e.g., for a large number of players, where one team is composed of
soulmates and all the players are matched to their least preferred team).

Correlation between utility and rank considers each random ranking of
players in $O$ used for both RSD and RPM, along with corresponding utilities 
$u_i(\pi)$ of players for the partition $\pi$ generated by the mechanism,
and computes the correlation between these. It thereby captures the relative
advantage that someone has by being earlier (or later) in the order to
propose than others, and is a key cause of ex post inequity in RSD. We view
the correlation measure as perhaps the most meaningful criterion of fairness
for mechanisms based on random player rankings: for example, someone who is
extremely unpopular is likely to have lower utility than others, but that's
likely to remain the case for any team formation mechanism with good
efficiency properties. On the other hand, this may be relatively invariant
of the ex post position that the player has in the order of proposers.

\begin{figure}[hbtp]
\centering
\begin{subfigure}[b]{0.45\textwidth}
		\includegraphics[width = \textwidth, height=32mm]{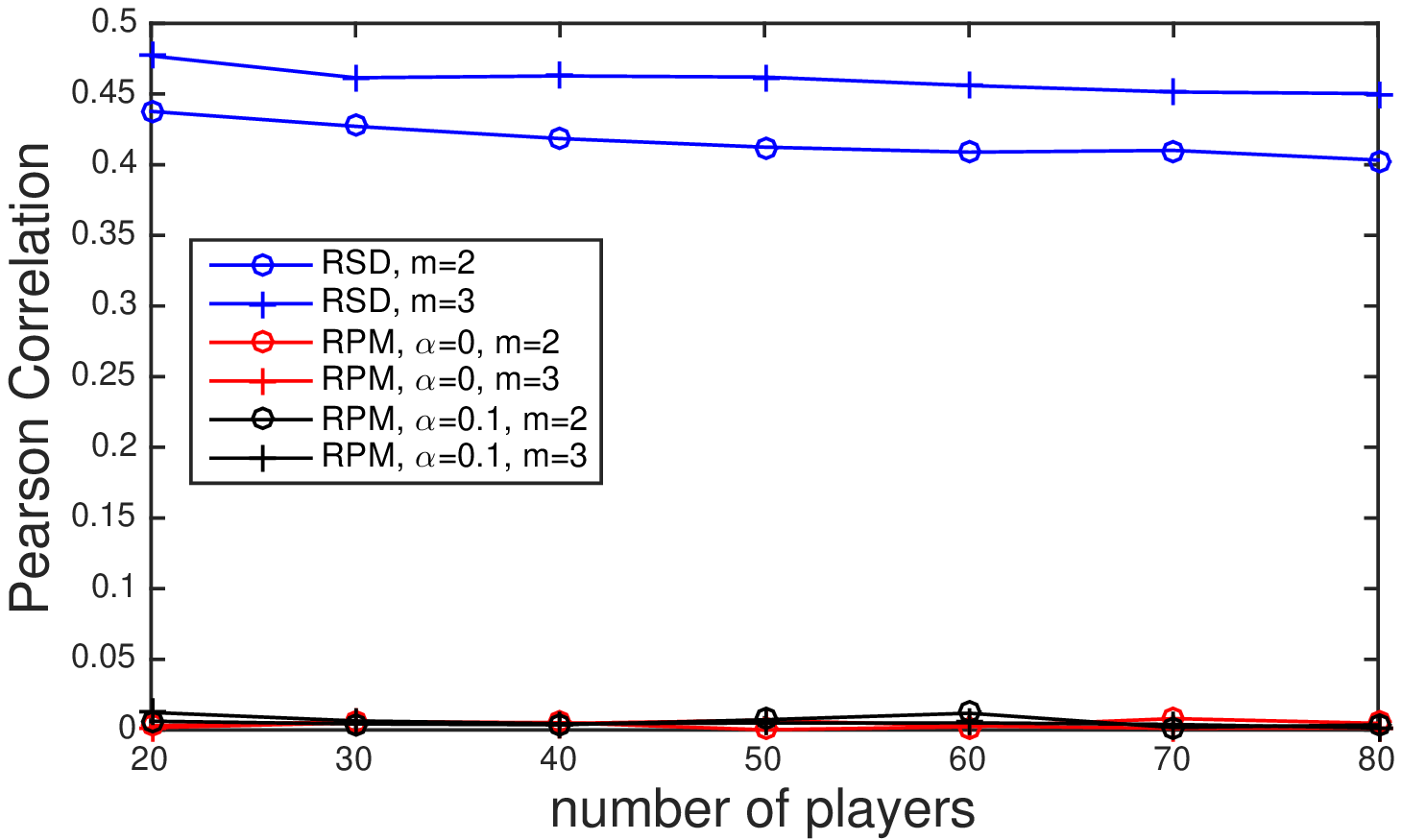} 
		\caption{scale-free network}
		\label{fig:Correlation_Scale}
	\end{subfigure}
\begin{subfigure}[b]{0.45\textwidth}
		\includegraphics[width = \textwidth, height=32mm]{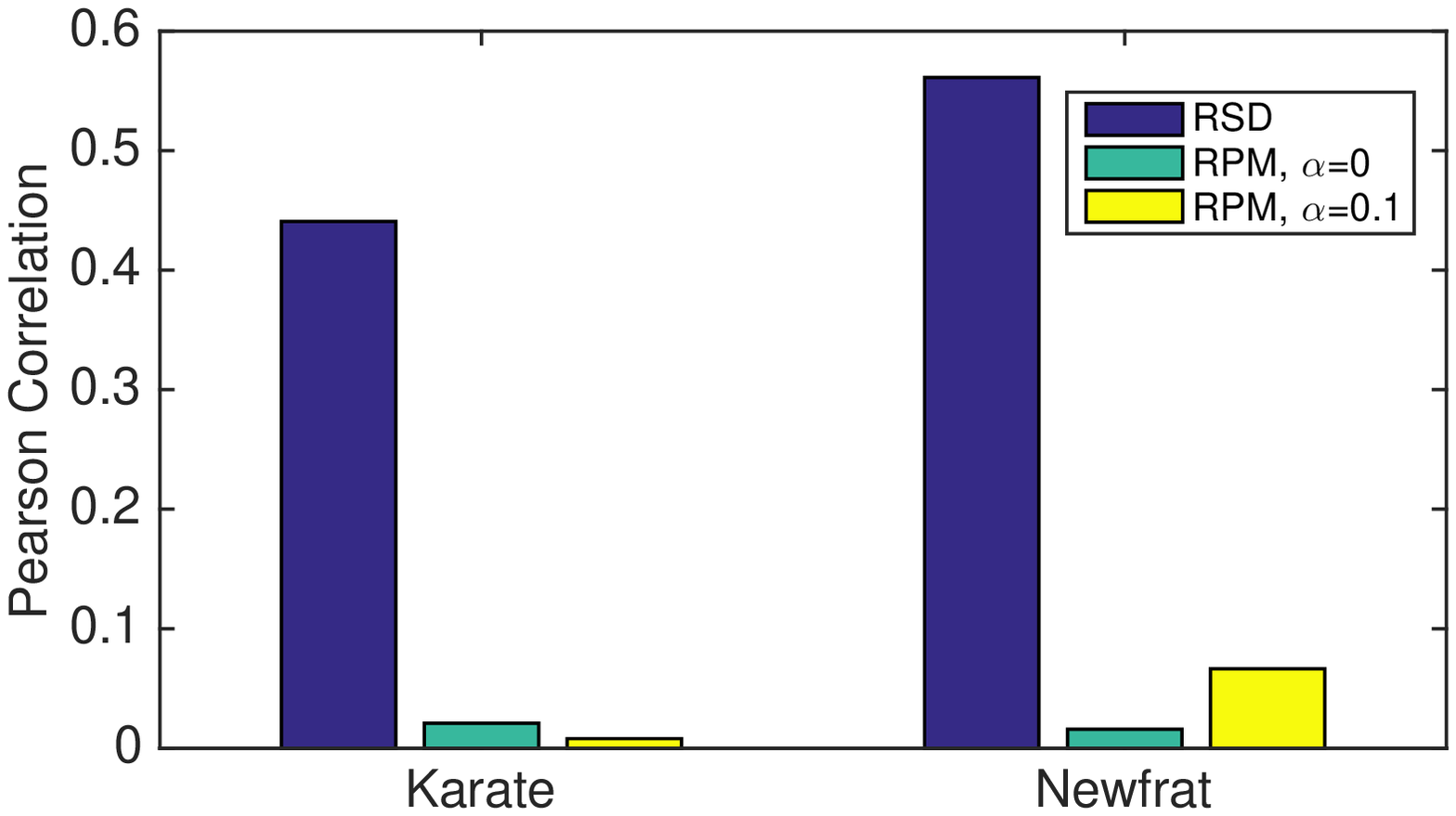} 
		\caption{Karate and Newfrat}
		\label{fig:Correlation_Real}
	\end{subfigure}
\caption{Pearson Correlation for the roommate problem}
\end{figure}

\begin{figure}[h]
\centering
\begin{subfigure}[b]{0.45\textwidth}
		\includegraphics[width=\textwidth, height=32mm]{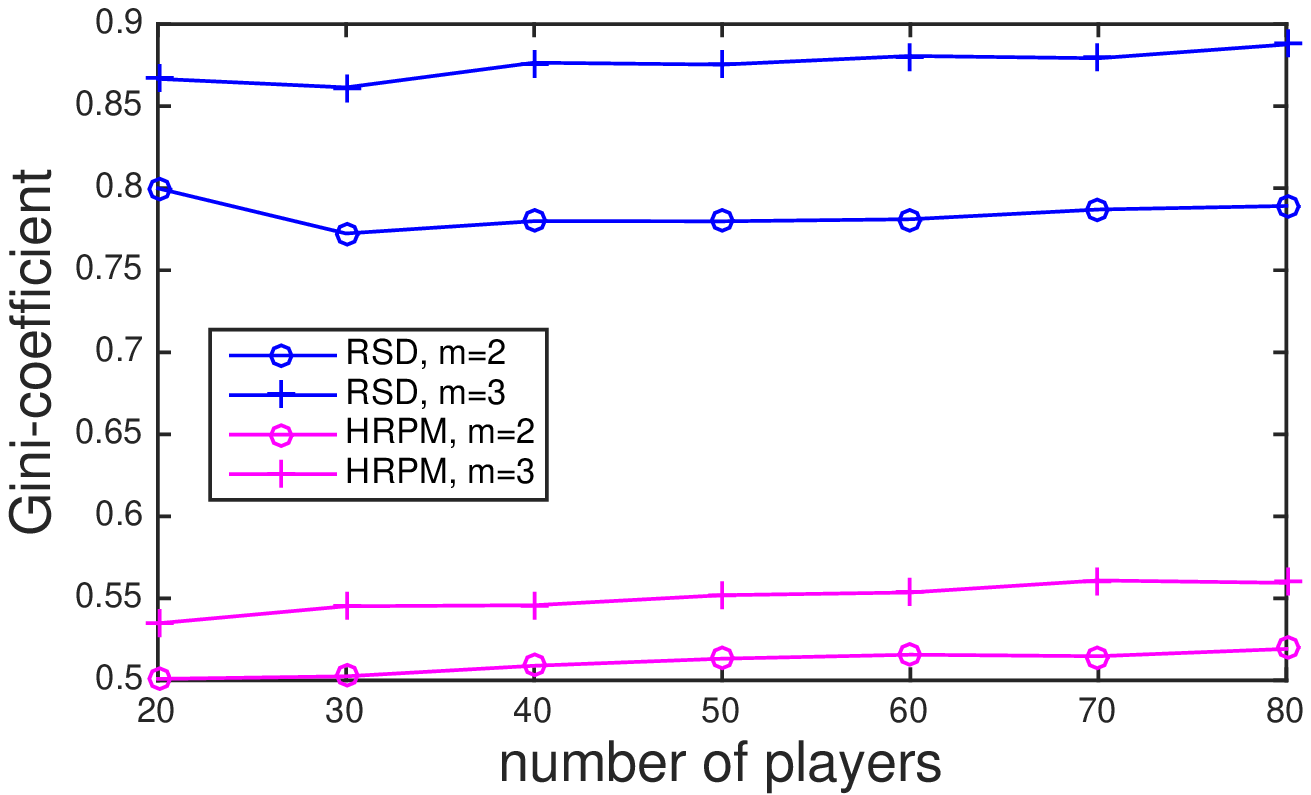} 
		\caption{scale-free networks}
		\label{fig:Gini_Scale_three}
	\end{subfigure}
\begin{subfigure}[b]{0.45\textwidth}
		\includegraphics[width=\textwidth, height=32mm]{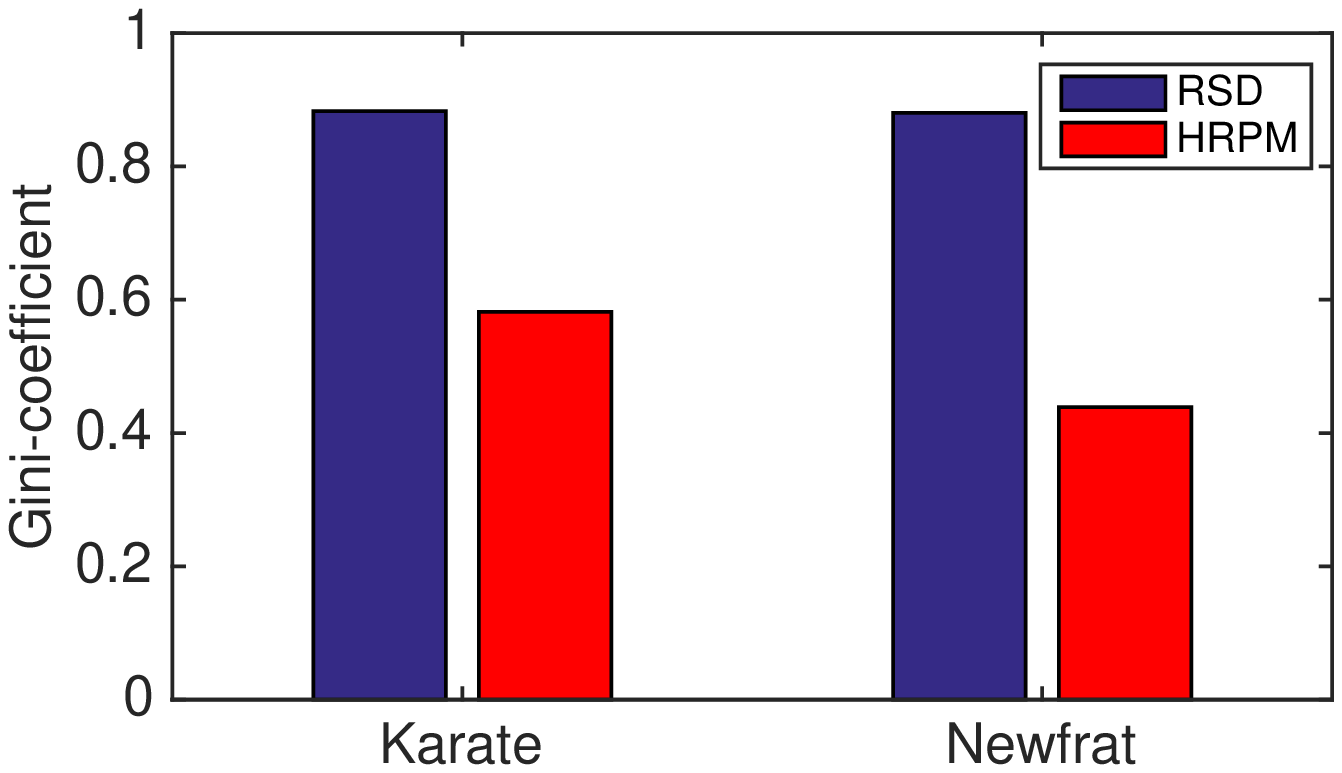} 
		\caption{Karate and Newfrat}
		\label{fig:Gini_Real_three}
	\end{subfigure}
\caption{Gini coefficient for the trio-roommate problem}
\label{fig:Gini_three}
\end{figure}

\begin{figure}[h]
\centering
\begin{subfigure}[b]{0.45\textwidth}
		\includegraphics[width=\textwidth, height=32mm]{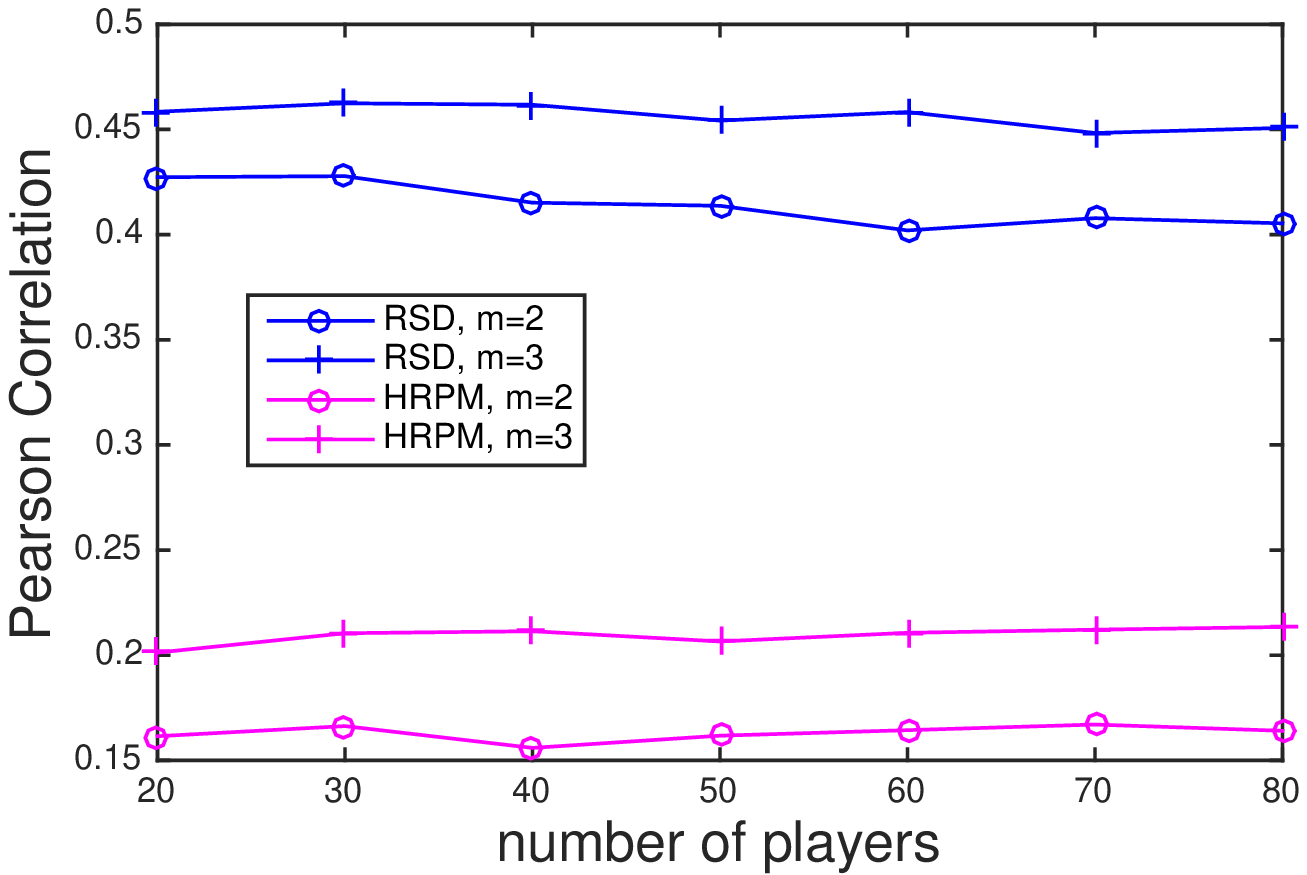} 
		\caption{scale-free networks}
		\label{fig:Correlation_Scale_three}
	\end{subfigure}
\begin{subfigure}[b]{0.45\textwidth}
		\includegraphics[width=\textwidth, height=32mm]{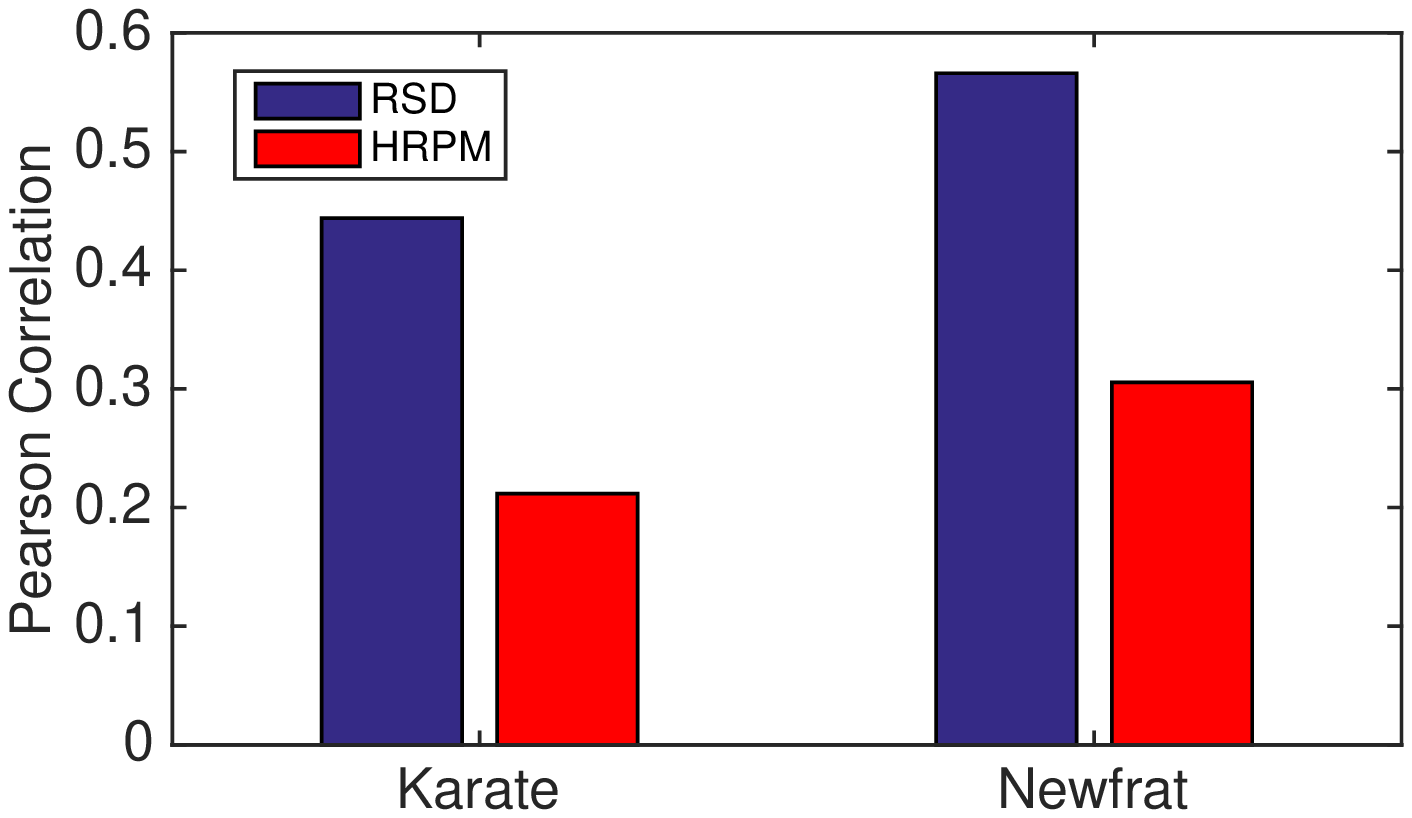} 
		\caption{Karate and Newfrat}
		\label{fig:Correlation_Real_three}
	\end{subfigure}
\caption{Pearson Correlation for the trio-roommate problem}
\label{fig:Correlation_three}
\end{figure}

Our experiments on the roommate problem show that RPM is significantly more
equitable than RSD on scale-free networks (Figures \ref{fig:Gini_Scale} and~%
\ref{fig:Correlation_Scale}), as well as on the Karate club network and
Newfrat dataset (Figures~\ref{fig:Gini_Real} and~\ref{fig:Correlation_Real}%
). The differences between exact and approximate RPM are negligible in most
instances.

In the trio-roommate problem, HRPM ($\beta=0.6$) is much more equitable than
RSD as shown in Figures \ref{fig:Gini_three} and \ref{fig:Correlation_three}%
. These results are statistically significant ($p<0.01$).

\bibliographystyle{plain}
\bibliography{rpm}

\begin{thebibliography}{1}

\bibitem{RevModPhys.74.47}
R\'eka Albert and Albert-L\'aszl\'o Barab\'asi.
\newblock Statistical mechanics of complex networks.
\newblock {\em Rev. Mod. Phys.}, 74:47--97, Jan 2002.

\bibitem{Bade15}
Sophie Bade.
\newblock Serial dictatorship: {The} unique optimal allocation rule when
  information is endogenous.
\newblock {\em Theoretical Economics}, 10:385--410, 2015.

\bibitem{banerjee_core_2001}
Suryapratim {Banerjee}, Hideo {Konishi}, and Tayfun {S{\"o}nmez}.
\newblock Core in a simple coalition formation game.
\newblock {\em Social Choice and Welfare}, 18(1):135--153, January 2001.

\bibitem{Bouveret:2011}
Sylvain Bouveret and J{\'e}r\^{o}me Lang.
\newblock A general elicitation-free protocol for allocating indivisible goods.
\newblock In {\em Proceedings of the Twenty-Second International Joint
  Conference on Artificial Intelligence - Volume Volume One}, IJCAI'11, pages
  73--78. AAAI Press, 2011.

\bibitem{Farrell88}
J.~Farrell and S.~Scotchmer.
\newblock Partnerships.
\newblock {\em Quarterly Journal of Economics}, 103:279--297, 1988.

\bibitem{leo2021matching}
Greg Leo, Jian Lou, Martin Van~der Linden, Yevgeniy Vorobeychik, and Myrna
  Wooders.
\newblock Matching soulmates.
\newblock {\em Journal of Public Economic Theory}, 23(5):822--857, 2021.

\bibitem{lorenz1905methods}
Max~O Lorenz.
\newblock Methods of measuring the concentration of wealth.
\newblock {\em Publications of the American statistical association},
  9(70):209--219, 1905.

\bibitem{newfrat}
P.~Nordlie.
\newblock {\em A longitudinal study of interpersonal attraction in a natural
  group setting}.
\newblock PhD thesis, University of Michigan, 1958.

\bibitem{Zac77}
W.W. Zachary.
\newblock An information flow model for conflict and fission in small groups.
\newblock {\em Journal of Anthropological Research}, 33:452--473, 1977.

\end{thebibliography}

\end{document}